\DeclareMathAlphabet{\mathpzc}{OT1}{pzc}{m}{it}
\DeclareMathOperator{\pref}{pref}
\DeclareMathOperator{\suff}{suff}
\providecommand{\abs}[1]{\lvert#1\rvert}
\newcommand{\infw}[1]{\mathbf{#1}}
\newcommand{\act}[2][]{\mathpzc{A\mkern-3mu c}_{#1}(#2)}
\newcommand{\abexp}[2]{\mathpzc{A\mkern-3mu e}_{#1}(#2)}
\begin{document}
  \title{Every nonnegative real number is an abelian critical exponent}
  \author{Jarkko Peltomäki\inst{1,2,3}\orcidID{0000-0003-3164-1559} \and Markus A. Whiteland\inst{3}}
  \institute{
    The Turku Collegium for Science and Medicine TCSM, University of Turku, Turku, Finland \and
    Turku Centre for Computer Science TUCS, Turku, Finland \and
    University of Turku, Department of Mathematics and Statistics, Turku, Finland\\
    \email{\{jspelt,mawhit\}@utu.fi}
  }

  \date{}
  \maketitle

  \begin{abstract}
    The abelian critical exponent of an infinite word $w$ is defined as the maximum ratio between the exponent and
    the period of an abelian power occurring in $w$. It was shown by Fici et al. that the set of finite abelian
    critical exponents of Sturmian words coincides with the Lagrange spectrum. This spectrum contains every large
    enough positive real number. We construct words whose abelian critical exponents fill the remaining gaps, that is,
    we prove that for each nonnegative real number $\theta$ there exists an infinite word having abelian critical
    exponent $\theta$. We also extend this result to the $k$-abelian setting.

    \keywords{abelian equivalence \and $k$-abelian equivalence \and critical exponent \and Sturmian word}
  \end{abstract}

  \section{Introduction}
  The study of powers and their avoidance has been one of the central themes in combinatorics on words; see
  \cite[Ch.~4]{2016:combinatorics_words_and_symbolic_dynamics}. The central notion here is that of the \emph{critical
  exponent} which measures the maximum exponent of a power occurring in a given word. Recently it has been popular to
  generalize the notion of a power using some equivalence relation in place of the usual equality of words. For
  example, abelian equivalence (see the references of \cite{2016:abelian_powers_and_repetitions_in_sturmian_words}),
  and its generalizations $k$-abelian equivalence
  \cite{2013:on_a_generalization_of_abelian_equivalence_and_complexity_of_infinite,2017:on_growth_and_fluctuation_of_k_abelian_complexity}
  and binomial equivalence
  \cite{2015:another_generalization_of_abelian_equivalence_binomial,2015:avoiding_2-binomial_squares_and_cubes} have
  been popular options.

  Two words $u$ and $v$ are \emph{abelian equivalent}, written $u \sim v$, if they are permutations of each other. An
  \emph{abelian power} of exponent $e$ and period $m$ is a word of the form $u_0 \cdots u_{e-1}$ such that
  $m = \abs{u_0}$ and $u_0$, $\ldots$, $u_{e-1}$ are nonempty and abelian equivalent. For example, $01 \cdot 10$ (a
  square) and $abc \cdot bca \cdot cab$ (a cube) are abelian powers. Now it is possible to define the abelian critical
  exponent of an infinite word as the maximum exponent of an abelian power occurring in it. However, this does not give
  any interesting information on abelian powers occurring in Sturmian words or, more generally, in words with bounded
  abelian complexity because such words contain abelian powers of arbitrarily high exponent
  \cite{2011:abelian_complexity_of_minimal_subshifts}. In order to capture more information on abelian powers of an
  infinite word to a single quantity, it was proposed in \cite{2016:abelian_powers_and_repetitions_in_sturmian_words}
  to define the \emph{abelian critical exponent} $\act{\infw{w}}$ of an infinite word $\infw{w}$ as the quantity
  \begin{equation*}
    \limsup_{m \to \infty} \frac{\abexp{\infw{w}}{m}}{m},
  \end{equation*}
  where $\abexp{\infw{w}}{m}$ is the supremum of exponents of abelian powers of period $m$ occurring in $\infw{w}$.
  This notion turns out to be much more interesting. For example, $\act{\infw{f}} = \sqrt{5}$ for the Fibonacci word
  $\infw{f}$, the fixed point of the substitution $0 \mapsto 01$, $1 \mapsto 0$
  \cite[Thm.~5.14]{2016:abelian_powers_and_repetitions_in_sturmian_words}. Furthermore $\sqrt{5}$ is the minimum
  abelian critical exponent among all Sturmian words
  \cite[Thm.~5.14]{2016:abelian_powers_and_repetitions_in_sturmian_words}. It follows that for each Sturmian word
  $\infw{s}$ and each $\delta > 0$, there exists an increasing sequence $(m_i)$ of integers such that $\infw{s}$
  contains an abelian power of period $m_i$ and total length greater than $(\sqrt{5} - \delta)m_i^2$. Notice that if
  $\infw{w}$ does not contain abelian powers with arbitrarily large exponent, then $\act{\infw{w}} = 0$. Many examples
  of such words are known; see, e.g., \cite[Ch.~4.6]{2016:combinatorics_words_and_symbolic_dynamics}. It is also
  possible that $\act{\infw{w}} = \infty$. Take for example the Thue-Morse word $\infw{t}$, the fixed point of the
  substitution $0 \mapsto 01$, $1 \mapsto 10$. Indeed, it is straightforward to see that $\infw{t}$ can be factored as
  a product of abelian equivalent words of length $2n$ for all $n \geq 0$. This shows that $\act{\infw{t}} = \infty$.

  Further study in \cite{2016:abelian_powers_and_repetitions_in_sturmian_words} showed the surprising fact that the set
  of finite abelian critical exponents of Sturmian words equals the Lagrange spectrum $\mathcal{L}$. The Lagrange
  constant of an irrational $\alpha$ is the infimum of the real numbers $\lambda$ such that for every $c > \lambda$ the
  inequality $\abs{\alpha - n/m} < 1/cm^2$ has only finitely many rational solutions $n/m$. The Lagrange spectrum is
  the set of finite Lagrange constants of irrational numbers. The Lagrange spectrum has been extensively studied in
  number theory since the works of Markov
  \cite{1879:sur_les_formes_quadratiques_binaires_indefinies,1880:sur_les_formes_quadratiques_binaires_indefinies_ii}
  in the 19th century. The famous theorems of Markov show that the initial part of $\mathcal{L}$ inside the interval
  $[\sqrt{5}, 3)$ is discrete. Later in 1947 Hall proved that $\mathcal{L}$ contains a half-line
  \cite{1947:on_the_sum_and_products_of_continued_fractions}. After a series of improvements by multiple authors, it
  was finally determined by Freiman in 1975 \cite{1975:diophantine_approximation_and_geometry_of_numbers} that the
  largest half-line contained in the Lagrange spectrum is $[c_F, \infty)$, where
  \begin{equation*}
    c_F = \frac{2221564096+283748\sqrt{462}}{491993569} = 4.5278295661 \ldots
  \end{equation*}
  Good sources for information on the Lagrange spectrum are the monograph of Cusick and Flahive
  \cite{1989:the_markoff_and_lagrange_spectra} and Aigner's book
  \cite{2013:markovs_theorem_and_100_years_of_the_uniqueness}. See also the recent book
  \cite{2019:from_christoffel_words_to_markoff_numbers} of Reutenauer for a more word-combinatorial flavor.

  The connection between the Lagrange spectrum and abelian critical exponents of Sturmian words shows that each
  real number larger than $c_F$ is the abelian critical exponent of some infinite word. This raises the obvious
  question of whether this can be extended to hold for all nonnegative numbers. In this paper, we answer the question
  in the positive. The main result of this paper is the following theorem.

  \begin{theorem}\label{thm:main}
    Let $\theta$ be a nonnegative real number. Then there exists an infinite word $\infw{w}$ such that
    $\act{\infw{w}} = \theta$. The word $\infw{w}$ can be taken over an alphabet of at most three letters.
  \end{theorem}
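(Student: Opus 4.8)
The plan is to reduce every case to dilating a Sturmian word by an integer factor. If $\theta = 0$, it suffices to take a word in which the exponents of abelian powers are bounded, for instance a binary word avoiding abelian fourth powers (such words exist, see e.g.~\cite[Ch.~4.6]{2016:combinatorics_words_and_symbolic_dynamics}): for such a word $\infw{w}$ the quantity $\abexp{\infw{w}}{m}$ is bounded uniformly in $m$, so $\act{\infw{w}} = 0$. Now assume $\theta > 0$. Since the Lagrange spectrum $\mathcal{L}$ contains the half-line $[c_F,\infty)$, we may pick an integer $q \geq 1$ with $q\theta \geq c_F$, so that $q\theta \in \mathcal{L}$; by the theorem of Fici et al.~\cite{2016:abelian_powers_and_repetitions_in_sturmian_words} identifying $\mathcal{L}$ with the set of finite abelian critical exponents of Sturmian words, there is a binary Sturmian word $\infw{s}$ with $\act{\infw{s}} = q\theta$. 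It then remains to manufacture from $\infw{s}$ a word whose abelian critical exponent is $\act{\infw{s}}/q$.

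For this I would take the $q$-uniform morphism $\sigma$ defined by $0 \mapsto 2^{q-1}0$ and $1 \mapsto 2^{q-1}1$, and set $\infw{w} = \sigma(\infw{s})$; this is a word over $\{0,1,2\}$, and over $\{0,1\}$ when $q = 1$. The crux is the dilation lemma $\act{\sigma(\infw{s})} = \act{\infw{s}}/q$, which I would prove by relating $\abexp{\sigma(\infw{s})}{m}$ to abelian powers in $\infw{s}$. Observe that the non-$2$ positions of $\sigma(\infw{s})$ are exactly the positions congruent to $-1$ modulo $q$, and that position $(t+1)q-1$ carries the letter $\infw{s}[t]$. Suppose first that $q \mid m$, say $m = qm'$, and take an abelian power of period $m$ and exponent $e$ in $\sigma(\infw{s})$. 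Each of its $e$ blocks has length $qm'$, hence contains exactly $m'$ non-$2$ positions, and reading these non-$2$ letters in order one recovers the length-$m'$ factor of $\infw{s}$ beginning at position $c_0 + jm'$, where $j$ indexes the block and $c_0 \geq 0$ is a fixed offset determined by the placement of the power. Since the blocks are abelian equivalent they contain equally many $1$s, so these $e$ consecutive non-overlapping length-$m'$ factors of $\infw{s}$ contain equally many $1$s and hence are abelian equivalent; that is, they form an abelian power of period $m'$ and exponent $e$ in $\infw{s}$. Conversely, applying $\sigma$ to an abelian power of period $m'$ in $\infw{s}$ yields one of period $qm'$ and the same exponent in $\sigma(\infw{s})$. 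Therefore $\abexp{\sigma(\infw{s})}{qm'} = \abexp{\infw{s}}{m'}$.

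Now suppose $q \nmid m$. The number of non-$2$ letters in a length-$m$ window of $\sigma(\infw{s})$ depends only on the starting position of the window modulo $q$, and, as that residue varies, it takes two distinct values precisely because $q \nmid m$. In an abelian power of period $m$ the starting positions of the blocks run through an arithmetic progression modulo $q$ of common difference $m$; on that progression the non-$2$ count is periodic of period $q/\gcd(m,q)$ and, by a short congruence argument, not constant, so the exponent is at most $q/\gcd(m,q) \leq q$. Hence $\abexp{\sigma(\infw{s})}{m} \leq q$ whenever $q \nmid m$, so such $m$ contribute nothing to the limit superior. Combining the two cases, $\act{\sigma(\infw{s})} = \limsup_{m'\to\infty} \abexp{\sigma(\infw{s})}{qm'}/(qm') = \tfrac{1}{q}\limsup_{m'\to\infty} \abexp{\infw{s}}{m'}/m' = \act{\infw{s}}/q = \theta$, and $\infw{w}=\sigma(\infw{s})$ is over at most three letters; this proves the theorem.

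I expect the decisive point to be the upper bound in the dilation lemma, i.e.\ ruling out unforeseen long abelian powers in $\sigma(\infw{s})$. For $q \mid m$ this is a careful bookkeeping of the positions of the non-$2$ letters, where one must verify that the induced factors of $\infw{s}$ really are consecutive and non-overlapping so that an abelian power genuinely results. For $q \nmid m$ it is here that the third letter pays off: the regularly spaced occurrences of $2$ act as anchors that prevent any abelian power of period close to a multiple of $q$ from being long, and the one place needing an honest (if short) computation is the bound $q/\gcd(m,q)$ on the exponent. The lower bounds, in contrast, are immediate because $\sigma$ preserves abelian equivalence. The same construction, with longer runs of $2$s, should yield the analogous statement in the $k$-abelian setting.
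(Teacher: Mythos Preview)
Your proof is correct and follows the paper's overall strategy: handle $\theta=0$ via known avoidance results, and for $\theta>0$ pick $N$ (your $q$) with $N\theta\geq c_F$, take a Sturmian word $\infw{s}$ with $\act{\infw{s}}=N\theta$, and apply an $N$-uniform three-letter substitution to divide the abelian critical exponent by $N$. The difference lies in the substitution itself. The paper uses $0\mapsto\#0^{N-1}$, $1\mapsto\#1^{N-1}$ with $N$ even; to decode an abelian power of period divisible by $N$ back to $\infw{s}$ it must first \emph{shift} the power so that each block begins with $\#$, and this shifting argument is where the parity hypothesis on $N$ is used. Your choice $0\mapsto 2^{q-1}0$, $1\mapsto 2^{q-1}1$ keeps exactly one letter of $\infw{s}$ per block and fills the rest with the marker; consequently the non-$2$ letters of \emph{any} length-$qm'$ window already form, in order, a length-$m'$ factor of $\infw{s}$, so the decoding is immediate with no shifting and no parity restriction. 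The case $q\nmid m$ is handled identically in both proofs, by counting marker letters in a window as a function of the starting residue modulo $q$ (this is the paper's \autoref{lem:div_length}). So your route is the same scheme with a tidier substitution on the decoding side. One small caveat: your closing remark that ``the same construction, with longer runs of $2$s'' extends to the $k$-abelian setting is not quite right as stated, since for $k\geq 2$ one has $\sigma(01)\not\sim_k\sigma(10)$ (their length-$(k-1)$ suffixes differ); the paper instead pads the images with $0^{k-1}$ on both sides of the distinguishing core to force $\tau(0)\tau(1)\sim_k\tau(1)\tau(0)$.
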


  This result should be compared with a result of Krieger and Shallit stating that every real number $\theta > 1$ is a
  critical exponent (in the usual sense) of some infinite word \cite{2007:every_real_number_greater_than_1}. Notice
  that here the number of letters required tends to infinity when $\theta$ tends to $1$
  \cite{2007:every_real_number_greater_than_1}, but in our setting we need at most three letters.

  We prove an analogue of \autoref{thm:main} for $k$-abelian critical exponents; see \autoref{sec:k-abelian} for the
  extension and the necessary definitions.

  Our proof method is to exploit the properties of the Lagrange spectrum, that is, the fact that \autoref{thm:main} is
  already known to be true for all reals greater than $c_F$. The idea is to find a suitable $N$-uniform substitution
  $\sigma$ such that each abelian power in $\sigma(\infw{w})$ can be decoded to an abelian power in $\infw{w}$ with the
  same exponent. This means, in essence, that the abelian powers in $\sigma(\infw{w})$ are the abelian powers of
  $\infw{w}$ blown up by a factor of $N$. Roughly speaking, the ratio of exponents and periods corresponding to
  $\act{\infw{w}}$ gets divided by $N$, that is, $\act{\infw{\sigma(\infw{w})}} = \act{\infw{w}}/N$. The conclusion is
  that \autoref{thm:main} is true for each real in the interval $[c_F/N, \infty)$, where $[c_F, \infty)$ is the largest
  half-line contained in the Lagrange spectrum. We may choose $N$ to be arbitrarily large, so \autoref{thm:main}
  follows. The extension of \autoref{thm:main} to the $k$-abelian setting is proved using the same ideas.

  We use the usual notions and notation from combinatorics on words. If the reader encounters anything undefined, we
  refer him or her to \cite{2002:algebraic_combinatorics_on_words}. Even though we mention Sturmian words several times
  in this paper, we do not need any properties of these binary words. For their definition, we refer the reader to
  \cite[Ch.~2]{2002:algebraic_combinatorics_on_words} and \cite[Ch.~4]{diss:jarkko_peltomaki}.
  
  \section{Proof of \autoref{thm:main}}\label{sec:abelian}
  Let $\theta$ be a nonnegative real number. If $\theta = 0$, then $\theta$ is the abelian critical exponent of any
  infinite word that avoids abelian powers with large enough exponent. Such words exist by
  \cite{1979:strongly_non-repetitive_sequences_and_progression-free_sets} (abelian fourth powers are avoidable over two
  letters); see also \cite[Ch.~4.6]{2016:combinatorics_words_and_symbolic_dynamics}.

  Assume then that $\theta > 0$, and let $N$ be an integer such that $N\theta \in [c_F, \infty)$. Let $\infw{w}$ be an
  infinite binary word. Our aim is to find an $N$-uniform substitution $f$ defined on a two-letter alphabet with the
  following properties:
  \begin{enumerate}[(i)]
    \item If an abelian power $u_0 \cdots u_{e-1}$ occurs in $\infw{w}$, then $f(u_0) \cdots f(u_{e-1})$ is an abelian
          power occurring in $f(\infw{w})$.
    \item If an abelian power $u_0 \cdots u_{e-1}$, $e \geq N$, occurs in $f(\infw{w})$, then $\infw{w}$ contains an
          abelian power $v_0 \cdots v_{e-1}$ with $\abs{v_0} = \abs{u_0}/N$.
  \end{enumerate}
  Let us show how to prove \autoref{thm:main} under the assumption that such $f$ exists.

  Let $\infw{s}$ be a Sturmian word having $\act{\infw{s}} = N\theta$. In fact, any
  binary word $\infw{s}$ with $\act{\infw{s}} = N\theta$ will do, we just know that such a Sturmian word exists by
  the results of \cite{2016:abelian_powers_and_repetitions_in_sturmian_words}. We claim that
  $\act{f(\infw{s})} = \theta$. This proves \autoref{thm:main} when $\theta > 0$ (assuming that $f(\infw{w})$ has at
  most three letters).

  By Property (i), we have $\abexp{f(\infw{s})}{tN} \geq \abexp{\infw{s}}{t}$ for all positive integers $t$. 
  Since $\act{\infw{s}} > 0$, the word $\infw{s}$ contains abelian powers of arbitrarily high exponent, and thus
  by Property (i) the word  $f(\infw{s})$ contains abelian powers of arbitrarily high exponent and period divisible by
  $N$. If $\abexp{f(\infw{s})}{tN} \geq N$, then $\abexp{f(\infw{s})}{tN} \leq \abexp{\infw{s}}{t}$ by Property (ii).
  Therefore there exists a sequence $(t_i)$ such that $\abexp{\infw{s}}{t_i} = \abexp{f(\infw{s})}{t_i N}$ for all $i$.
  Hence
  \begin{equation*}
    \limsup_{i \to \infty} \frac{\abexp{f(\infw{s})}{t_i N}}{t_i N} = \limsup_{i \to \infty} \frac{\abexp{\infw{s}}{t_i}}{t_i N} = \frac{1}{N} \limsup_{i \to \infty} \frac{\abexp{\infw{s}}{t_i}}{t_i} = \frac{1}{N} \act{\infw{s}} = \theta,
  \end{equation*}
  so $\act{f(\infw{s})} \geq \theta$. If $\act{f(\infw{s})} > \theta$, then there exists an increasing sequence
  $(\ell_i)$ such that
  \begin{equation*}
    \frac{\abexp{f(\infw{s})}{\ell_i}}{\ell_i} > \theta > 0
  \end{equation*}
  for all $i$. By the preceding, only finitely many of the numbers in the sequence $(\ell_i)$ are divisible by $N$. By
  Property (ii), we thus have $\abexp{f(\infw{s})}{\ell_i} \geq N$ only for finitely many $i$ meaning that
  \begin{equation*}
    \frac{\abexp{f(\infw{s})}{\ell_i}}{\ell_i} < \frac{N}{\ell_i}
  \end{equation*}
  for $i$ large enough. This is impossible as $N/\ell_i \to 0$ as $i \to \infty$. The conclusion is that
  $\act{f(\infw{s})} = \theta$. This concludes the proof of \autoref{thm:main}.
  
  Let us then show how to choose a suitable substitution $f$. Let $N$ be a fixed positive and \emph{even} integer, and
  define the $N$-uniform substitution $\sigma\colon \{0, 1\}^* \to \{0, 1, \#\}^*$ by
  \begin{align*}
    0 &\mapsto \# 0^{N-1}, \\
    1 &\mapsto \# 1^{N-1}.
  \end{align*}

  \begin{lemma}\label{lem:trivial}
    The substitution $\sigma$ satisfies Property (i).
  \end{lemma}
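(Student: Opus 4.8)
The plan is to verify Property (i) directly from the definition of $\sigma$. Property (i) asks: if $u_0 \cdots u_{e-1}$ is an abelian power occurring in $\infw{w}$ (with all $u_j$ nonempty, abelian equivalent, of common length $m$), then $\sigma(u_0) \cdots \sigma(u_{e-1})$ is an abelian power occurring in $\sigma(\infw{w})$. Since $\sigma$ is a substitution (a monoid homomorphism), $\sigma(u_0 \cdots u_{e-1}) = \sigma(u_0) \cdots \sigma(u_{e-1})$ is automatically a factor of $\sigma(\infw{w})$ whenever $u_0 \cdots u_{e-1}$ is a factor of $\infw{w}$, so occurrence is immediate. What remains is to check that the blocks $\sigma(u_0), \ldots, \sigma(u_{e-1})$ are nonempty, of equal length, and pairwise abelian equivalent.

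First I would note that $\sigma$ is $N$-uniform, so $\abs{\sigma(u_j)} = N\abs{u_j} = Nm$ for every $j$; in particular all blocks have the same length $Nm$, and since $m \geq 1$ this length is nonzero, so each block is nonempty. The key remaining point is abelian equivalence of the blocks. Here I would use the elementary fact that an $N$-uniform substitution respects abelian equivalence: if $u \sim v$ then $\sigma(u) \sim \sigma(v)$. To see this, observe that for a word $x$ the Parikh vector of $\sigma(x)$ depends only on the Parikh vector of $x$ — concretely, writing $|x|_0 = a$ and $|x|_1 = b$, the image $\sigma(x)$ contains exactly $a+b$ copies of $\#$ (one per letter of $x$), exactly $(N-1)a$ copies of $0$, and exactly $(N-1)b$ copies of $1$. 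Thus the Parikh vector of $\sigma(x)$ is a fixed linear function of the Parikh vector of $x$. Since $u \sim v$ means $u$ and $v$ have the same Parikh vector, it follows that $\sigma(u)$ and $\sigma(v)$ have the same Parikh vector, i.e.\ $\sigma(u) \sim \sigma(v)$. Applying this with $u = u_j$ and $v = u_{j'}$ for all pairs $j, j'$ gives pairwise abelian equivalence of the blocks $\sigma(u_j)$.

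Combining these observations: $\sigma(u_0) \cdots \sigma(u_{e-1})$ is a factor of $\sigma(\infw{w})$, it is a concatenation of $e$ nonempty words each of length $Nm$, and these words are pairwise abelian equivalent. Hence it is an abelian power of exponent $e$ and period $Nm$ occurring in $\sigma(\infw{w})$, which is exactly Property (i) (with $f = \sigma$). There is no real obstacle here — the lemma is named \texttt{lem:trivial} for a reason — the only thing to be careful about is making explicit the (trivial) fact that uniform substitutions preserve Parikh vectors up to the fixed linear map described above, which is what makes abelian equivalence pass through $\sigma$.
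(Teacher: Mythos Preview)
Your proof is correct and follows essentially the same approach as the paper, just with more detail: the paper's entire proof is the single remark that Property (i) trivially holds for any nonerasing substitution, which is exactly the Parikh-vector linearity observation you spell out explicitly for $\sigma$.
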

  \begin{proof}
    Property (i) trivially holds for any nonerasing substitution.
  \end{proof}
	
  Before showing that the substitution $\sigma$ satisfies Property (ii), we show that the period of an abelian power
  with large enough exponent is divisible by $N$, the length of the substitution $\sigma$.

	\begin{lemma}\label{lem:div_length}
    Let $\infw{w}$ be an infinite binary word. If an abelian power $u_0 \cdots u_{e-1}$, with $e \geq N$, occurs in
    $\sigma(\infw{w})$, then $N$ divides $\abs{u_0}$.
	\end{lemma}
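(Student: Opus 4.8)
The plan is to exploit the fact that, by definition of $\sigma$, the letter $\#$ marks exactly the positions of $\sigma(\infw{w})$ that are divisible by $N$, together with the observation that abelian equivalence forces every block $u_i$ to contain the same number of occurrences of $\#$. Concretely, I would write $\infw{v} = \sigma(\infw{w})$ and index it from $0$, so that position $q$ of $\infw{v}$ carries $\#$ if and only if $N \mid q$. Suppose the given abelian power occurs at position $p$, with period $m = \abs{u_0}$, so that $u_i$ occupies the positions $p + im, \ldots, p + (i+1)m - 1$. Since $u_0, \ldots, u_{e-1}$ are abelian equivalent, they contain the same number $c$ of occurrences of $\#$; counting the multiples of $N$ in each window yields
\[
  \left\lfloor \frac{p + (i+1)m - 1}{N} \right\rfloor - \left\lfloor \frac{p + im - 1}{N} \right\rfloor = c
\]
for all $i \in \{0, \ldots, e-1\}$, and summing these telescoping identities gives $\lfloor (p + jm - 1)/N \rfloor = jc + C$ for $j \in \{0, \ldots, e\}$, where $C = \lfloor (p-1)/N \rfloor$.

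The next step is a pigeonhole argument on residues. Writing $p + jm - 1 = N(jc + C) + y_j$ with $y_j \in \{0, 1, \ldots, N-1\}$ and subtracting consecutive identities, one gets $y_{j+1} - y_j = m - Nc$ for every $j \in \{0, \ldots, e-1\}$; hence $(y_j)_{j=0}^{e}$ is an arithmetic progression contained in $\{0, \ldots, N-1\}$ with common difference $d := m - Nc$. If $d \neq 0$, then $\abs{y_e - y_0} = e\abs{d} \geq e \geq N$, which is impossible since both $y_0$ and $y_e$ lie in $\{0, \ldots, N-1\}$. Therefore $d = 0$, that is, $m = Nc$, so $N$ divides $\abs{u_0}$. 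As a side effect this also shows $c \geq 1$, since $m \geq 1$.

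I do not expect a genuine obstacle in this argument; the only points that require some care are the bookkeeping with the floor functions — in particular the degenerate case $p = 0$, where $p + im - 1$ can equal $-1$, but the identity $\lfloor x/N \rfloor + 1 = \#\{\, q \geq 0 : q \leq x,\ N \mid q \,\}$ still holds — and checking that the hypothesis $e \geq N$ is exactly what makes the pigeonhole step go through (for $e = N-1$ an arithmetic progression of length $N$ with common difference $\pm 1$ would fit inside $\{0, \ldots, N-1\}$, so the conclusion could fail).
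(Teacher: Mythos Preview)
Your argument is correct. Both your proof and the paper's start from the same observation---that the letter $\#$ marks exactly the positions of $\sigma(\infw{w})$ congruent to $0$ modulo $N$, so abelian equivalence forces each block $u_i$ to contain the same number of $\#$'s---but the deductions from this point diverge.

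The paper writes $m = tN + r$, computes $\abs{u_j}_\#$ explicitly in terms of the residue $n_j = (p+jm) \bmod N$ (obtaining $t$ or $t+1$ according to whether $n_j \in \{1,\ldots,N-r\}$ or not), and then argues via the coset $n_0 + \langle r \rangle$ in $\mathbb{Z}/N\mathbb{Z}$ that, when $r>0$ and $e \geq N$, both cases must occur among $j=0,\ldots,e-1$, yielding two blocks with different $\#$-counts. Your route is more arithmetic: you telescope the common count $c$ into $\lfloor (p+jm-1)/N \rfloor = jc + C$, read off that the remainders $y_j$ form an arithmetic progression with common difference $m - Nc$, and bound $\abs{y_e - y_0}$ to force $m = Nc$. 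This avoids the case split on $n_j$ and the group-theoretic detour entirely, and it makes the role of the hypothesis $e \geq N$ completely transparent (as you note, $e = N-1$ would allow a progression of step $\pm 1$ to fit). The paper's version, on the other hand, gives a slightly more structural picture of \emph{how} the $\#$-counts vary across the blocks, which is what motivates the generalization in the remark immediately following the lemma; your argument yields that generalization just as easily, since it uses nothing about $\sigma$ beyond the periodic placement of $\#$.
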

	\begin{proof}
    Let $m = \abs{u_0}$, and write $m = tN + r$ for some $t\geq 0$ and $0\leq r < N$. The claim is thus that $r=0$.
    Assume, for a contradiction, that $r>0$. Observe that for $\sigma(\infw{w}) = a_0a_1\cdots$, where
    $a_n\in \{0,1,\#\}$ for each $n\geq 0$, we have $a_n = \#$ if and only if $n\equiv 0\pmod{N}$. Let us denote the
    position of the occurrence of $u_j$ in $\sigma(\infw{w})$ by $i_j$, that is,
    \begin{equation*}
      u_j = a_{i_j}a_{i_j+1}\cdots a_{i_j+ m-1}.
    \end{equation*}
    Observe that $i_j = i_0 + j m$, and $i_j \equiv i_0+j r \pmod{N}$ for each $j=0$,$\ldots$,$e-1$. Notice also that
    the number of occurrences of the letter $\#$ in $u_j$ equals the number of indices $k$ in the set
    $\{i_j,i_{j}+1,\ldots, i_{j}+m-1\}$ for which $k\equiv 0 \pmod{N}$. Let $n_j = i_j \mod N$. If $n_j = 0$, then we
    may compute the value $\abs{u_{j}}_{\#}$ as follows:
    \begin{equation*}
      \abs{u_{j}}_{\#} = \left\lceil \frac{m}{N} \right\rceil = \left\lceil \frac{tN+r}{N} \right\rceil = t + \left\lceil \frac{r}{N} \right\rceil =t+1
    \end{equation*}
    since $0<r<N$ by assumption. If $n_j>0$, then none of the first $N-n_j$ letters of $u_j$ equals $\#$. The value
    $\abs{u_j}_{\#}$ is thus computed as follows:
    \begin{equation*}
      \abs{u_j}_{\#} = \left\lceil \frac{m-(N-n_j)}{N} \right\rceil = \left\lceil \frac{tN + r -(N-n_j)}{N} \right\rceil = t-1 + \left\lceil \frac{r + n_j}{N} \right\rceil.
    \end{equation*}
    We conclude that $\abs{u_j}_{\#} = t + 1$ if and only if $n_j=0$ or $n_j > N - r$, and otherwise
    $\abs{u_j}_{\#} = t$.
      
    We exhibit two words $u_{j_1}$ and $u_{j_2}$ from the abelian power for which the number of occurrences of the
    letter $\#$ differ. This contradiction proves our claim.	Since $e\geq N$, we see that the numbers $n_j$,
    $n_j \equiv n_0 + jr \pmod{N}$, $j=0$,$\ldots$,$e-1$, form the coset $n_0 + \langle r \rangle$ of the subgroup
    $\langle r \rangle$ of $\mathbb{Z}/{N\mathbb{Z}}$. Let now $d = \gcd(r,N)$, so that
    $\langle r \rangle = \{0,d,2d,\ldots,(N/d-1)d\}$. For example, if $\gcd(r,N) = 1$, then
    $\langle r\rangle = \mathbb{Z}/{N\mathbb{Z}}$. There thus exists an index $j_1$ such that the letter $\#$ occurs
    among the first $d$ letters of $u_{j_1}$. This means that either $n_{j_1}=0$ or
    \begin{equation*}
      n_{j_1} > N-d \geq N-r.
    \end{equation*} Thus $\abs{u_{j_1}}_{\#} = t+1$ as was concluded previously.
    Similarly, there exists an index $j_2$ such that the letter $\#$ occurs among the $d$ letters immediately preceding
    $u_{j_2}$. This means that 
    \begin{equation*}
      0< n_{j_2} \leq d.
    \end{equation*}
    In this case 
    \begin{equation*}
      n_{j_2} + r \leq d + r \leq d + N-d = N
    \end{equation*}
    since $r \leq N-\gcd(r,N) = N-d$. We thus have $n_{j_2} \leq N-r$ implying that $\abs{u_{j_2}}_{\#} = t$ as was
    concluded previously. This concludes the proof.
	\end{proof}
	
	\begin{remark}\label{rem:generalization_of_divisibility_property}
    The above result may be slightly generalized. Indeed, notice that the only structural properties of $\sigma$ used
    in the above proof are that $\sigma$ is uniform, the images of the letters begin with $\#$, and the images of the
    letters contain no other occurrences of $\#$. In fact, the property that both images of letters begin with $\#$ is
    not important, it is only required that $\#$ occurs at the same position in both $\sigma(0)$ and $\sigma(1)$. We
    are thus led to the following generalization of \autoref{lem:div_length}. Let
    $\varphi\colon \{0,1\}^*\to \{0,1,\#\}^*$ be a uniform substitution defined by $\varphi(0) = u\#v$,
    $\varphi(1) = u'\#v'$, where $u,u',v,v'\in\{0,1\}^*$, $\abs{u} = \abs{u'}$, and $\abs{v} = \abs{v'}$. Let
    $\infw{w}$ be a binary word. If an abelian power $u_0\cdots u_{e-1}$, $e\geq \abs{u\#v}$, occurs in
    $\varphi(\infw{w})$, then $\abs{u\#v}$ divides $\abs{u_0}$. We shall need this generalization later in
    \autoref{sec:k-abelian}.
	\end{remark}
	
	\begin{lemma}\label{lem:abelian_power_in_preimage}
    The substitution $\sigma$ satisfies Property (ii).
	\end{lemma}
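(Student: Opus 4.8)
The plan is to decode an abelian power through $\sigma$ block by block. Suppose $u_0\cdots u_{e-1}$ with $e\geq N$ occurs in $\sigma(\infw{w})$, and set $m=\abs{u_0}$. By \autoref{lem:div_length} we have $N\mid m$, and since $m\geq 1$ this forces $m=tN$ with $t\geq 1$. Let the occurrence begin at position $i_0$ of $\sigma(\infw{w})$ and write $i_0=q_0N+n_0$ with $0\leq n_0<N$. Because every block $u_j$ has length $tN$, the block $u_j$ begins at position $i_0+jm=q_jN+n_0$ with $q_j:=q_0+jt$; thus all the $u_j$ begin at the same offset $n_0$ inside a $\sigma$-image of a single letter, and $u_j$ is exactly the factor of $\sigma(\infw{w})$ occupying the positions $q_jN+n_0,\ldots,(q_j+t)N+n_0-1$. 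Set $v_j:=w_{q_j}w_{q_j+1}\cdots w_{q_j+t-1}$; since $q_{j+1}=q_j+t$ these are consecutive factors of $\infw{w}$, so it is enough to prove that $v_0,\ldots,v_{e-1}$ are pairwise abelian equivalent, as then $v_0\cdots v_{e-1}$ is an abelian power of period $t=\abs{u_0}/N$ occurring in $\infw{w}$.

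When $n_0=0$ this is immediate, since $u_j=\sigma(v_j)$ and $\abs{\sigma(v)}_a=(N-1)\abs{v}_a$ for $a\in\{0,1\}$, so $\abs{u_i}_a=\abs{u_j}_a$ forces $\abs{v_i}_a=\abs{v_j}_a$. The case $n_0>0$ is the crux, and the main obstacle. Here $u_j$ is no longer a $\sigma$-image, but we may re-block $\sigma(\infw{w})$ with offset $n_0$: reading $\sigma(\infw{w})$ in windows of length $N$ starting at position $n_0$, the $k$-th window is $B_k:=w_k^{N-n_0}\,\#\,w_{k+1}^{n_0-1}$, since inside $\sigma(w_k)=\#w_k^{N-1}$ the letter in position $0$ is $\#$ while those in positions $1,\ldots,N-1$ all equal $w_k$. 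With this, $u_j=B_{q_j}B_{q_j+1}\cdots B_{q_j+t-1}$.

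Now it only remains to count occurrences of the letter $0$. From $\abs{B_k}_0=(N-n_0)[w_k=0]+(n_0-1)[w_{k+1}=0]$, writing $\epsilon_k=1$ if $w_{q_k}=0$ and $\epsilon_k=0$ otherwise, a short telescoping (reindexing the $w_{k+1}$ sum and comparing windows) gives
\begin{equation*}
  \abs{u_j}_0=(N-1)\abs{v_j}_0+(n_0-1)(\epsilon_{j+1}-\epsilon_j).
\end{equation*}
Since $u_0,\ldots,u_{e-1}$ are abelian equivalent, $\abs{u_j}_0$ is some constant $P$; as $\abs{v_j}_0\in\mathbb{Z}$, this forces $(n_0-1)(\epsilon_{j+1}-\epsilon_j)\equiv P\pmod{N-1}$ for all $j$. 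The value $(n_0-1)(\epsilon_{j+1}-\epsilon_j)$ always lies in $\{-(n_0-1),\,0,\,n_0-1\}$, and here is where evenness of $N$ is essential: $N-1$ is odd and $1\leq n_0-1\leq N-2$ when $n_0\geq 2$, so the three integers $-(n_0-1),0,n_0-1$ are pairwise incongruent modulo $N-1$ (and the set is $\{0\}$ when $n_0=1$); hence at most one element of it is congruent to $P$, so $(n_0-1)(\epsilon_{j+1}-\epsilon_j)$, and with it $\abs{v_j}_0$, does not depend on $j$. Combined with $\abs{v_j}_1=t-\abs{v_j}_0$, this shows the $v_j$ are pairwise abelian equivalent, which completes the argument. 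I expect the only fiddly point to be the derivation of the displayed identity (separating $\abs{B_k}_0$ into the contributions of $w_k$ and $w_{k+1}$, reindexing, and telescoping); everything else drops out once the re-blocking picture and the parity observation are in place.
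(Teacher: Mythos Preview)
Your argument is correct. The re-blocking formula $u_j=B_{q_j}\cdots B_{q_j+t-1}$ with $B_k=w_k^{N-n_0}\#w_{k+1}^{n_0-1}$ is right, the telescoping identity $\abs{u_j}_0=(N-1)\abs{v_j}_0+(n_0-1)(\epsilon_{j+1}-\epsilon_j)$ checks out (split $\abs{B_k}_0$, reindex the $w_{k+1}$-part, and compare with $\abs{v_j}_0$), and the parity step is sound: since $N-1$ is odd and $1\le n_0-1\le N-2$, the three residues $-(n_0-1),0,n_0-1$ are pairwise distinct modulo $N-1$, so the boundary term is forced to be constant in $j$ and hence so is $\abs{v_j}_0$.

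The route, however, is not the paper's. The paper first \emph{shifts} the abelian power so that each block begins with $\#$: it argues (assuming without loss of generality that $u_0$ begins with $0^{N-n}\#$) that if $u_1$ began instead with $1^{N-n}\#$ then, comparing $\abs{u_0}_1$ and $\abs{u_1}_1$ through the $\sigma$-preimages of the aligned blocks, one would obtain $N-1=2(n-1)$, contradicting the evenness of $N$. Iterating, all $u_j$ share the same prefix $0^{N-n}\#$ and suffix $\#0^{n-1}$, so one can set $u'_j=\#0^{n-1}u_j(\#0^{n-1})^{-1}$ and then simply decode $u'_j=\sigma(v_j)$. Your approach skips the structural analysis of prefixes and suffixes entirely: you decode with the offset $n_0$ already in place and push all the work into a single modular computation on $\abs{u_j}_0$. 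Both proofs pivot on exactly the same arithmetic fact (that $N-1$ odd prevents $2(n_0-1)\equiv 0\pmod{N-1}$), but your packaging is more algebraic and avoids the case split $n=0$, $n=1$, $n>1$, while the paper's argument yields the extra structural information that the $u_j$ themselves can be rigidly shifted to $\sigma$-images.
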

	\begin{proof}
    Let $u_0 \cdots u_{e-1}$, $e \geq N$, be an abelian power occurring in $\sigma(\infw{w})$. It follows by
    \autoref{lem:div_length} that $N$ divides the length of $u_0$. Our aim is to show that the abelian power
    $u_0 \cdots u_{e-1}$ can be shifted (to the left or the right) to obtain another abelian power
    $u'_0 \cdots u'_{e-1}$ with $\abs{u'_0} = \abs{u_0}$ such that each $u'_i$ begins with the letter $\#$. Before
    doing so, let us show how the main claim follows from this. Because $\sigma$ is injective, as is readily verified,
    there exist unique factors $v_0$, $\ldots$, $v_{e-1}$ of $\infw{w}$ of length $\abs{u_0}/N$ such that
    $\sigma(v_i) = u'_i$ for $i = 0$, $\ldots$, $e - 1$. Notice that $v_0 \cdots v_{e-1}$ is a factor of $\infw{w}$.
    Clearly the words $v_i$ are abelian equivalent as $\abs{v_i}_0 = \abs{u'_i}_{0}/(N-1)$ and
    $\abs{u'_i}_0 = \abs{u'_j}_0$ for all $j$. We conclude that the word $v_0 \cdots v_{e-1}$ is an abelian power in
    $\infw{w}$.
	
    Let us again write $\sigma(\infw{w}) = a_0a_1\cdots$ with $a_n\in\{0,1,\#\}$ for each $n\geq 0$. Let $u_0$ have the
    position $i$ in $\sigma(\infw{w})$, and let $n = i \mod N$. If $n = 0$ then we are done since we may choose
    $u'_i=u_i$ in the above (recall that $N$ divides $\abs{u_0}$). Also, if $n=1$, each word $u_j$, $j=0$, $\ldots$,
    $e-1$, is immediately preceded by $\#$ in $\sigma(\infw{w})$ and, moreover, each of the words ends with $\#$. By
    setting $u_j' = \#u_j\#^{-1}$, we see that $\#u_0\cdots u_{e-1} = u_0'\cdots u_{e-1}'\#$ occurs in
    $\sigma(\infw{w})$, and clearly $u_j' \sim u_0$ for each $j=0$,$\ldots$,$e-1$. Thus $u_0'\cdots u_{e-1}'$ is an
    abelian power of the claimed form. Assume now that $n>1$. Without loss of generality, we assume that $u_0$ begins
    with $0$ so, in fact, $u_0$ begins with $0^{N-{n}}\#$. By the form of the substitution, $u_0$ is preceded by
    $\#0^{n-1}$ in $\sigma(\infw{w})$. We claim that each of the words $u_j$, $j=0$, $\ldots$, $e-1$, begins with
    $0^{N-{n}}\#$ and ends with $\#0^{n-1}$. Let us first show that $u_1$ begins with $0^{N-n}\#$ (and thus that $u_0$
    ends with $\#0^{n-1}$). Assume for a contradiction that $u_1$ begins with $1^{N-n}\#$ (whence $u_0$ ends with
    $\#1^{n-1}$), and say that $u_1$ ends with $\#c^{n-1}$ where $c\in \{0,1\}$. Now the word
    $\#0^{n-1}u_0(\#1^{n-1})^{-1}$ is the image of a factor $x$ of $\infw{w}$. Similarly, the word
    $\#1^{n-1}u_1(\#c^{n-1})^{-1}$ is the image of a factor $y$ of $\infw{w}$ with $\abs{x} = \abs{y}$. We may write
    \begin{equation*}
      \abs{u_0}_{1} = \abs{x}_{1}(N-1) + n-1
    \end{equation*}
    and
    \begin{equation*}
      \abs{u_1}_{1} = \abs{y}_{1}(N-1) - (n-1) + \delta_{c=1}\cdot (n-1),
    \end{equation*}
    where $\delta_{c=1} = 1$ if $c=1$, and otherwise $\delta_{c=1} = 0$. Since $u_0 \sim u_1$, by rearranging the
    terms, we obtain 
    \begin{equation*}
      (\abs{y}_{1}-\abs{x}_{1})(N-1) = (2 - \delta_{c=1})(n-1).
    \end{equation*}
    Notice here that $1 \leq 2-\delta_{c=1} \leq 2$ and that $n>1$. The right side of the inequality is positive, so
    $\abs{y}_1 - \abs{x}_1 \geq 1$. Since $N > n$, it must be that $\abs{y}_1 - \abs{x}_1 < 2 - \delta_{c=1} \leq 2$.
    We conclude that $\abs{y}_1 - \abs{x}_1 = 1$ and, furthermore, $\delta_{c=1} = 0$. We now have
    \begin{equation*}
      N-1 = 2(n-1),
    \end{equation*}
    which is impossible since $N$ was chosen to be even. This contradiction shows that $u_1$ begins with $0^{N-n}\#$ as
    well. A symmetric argument shows that $u_1$ ends with $\#0^{n-1}$. We may repeat the above argument to show that
    each of the words $u_j$, $j=0$,$\ldots$,$e-1$, begins with $0^{N-n}\#$ and ends with $\#0^{n-1}$.
	
    To finish off the proof, we choose $u_j' = \#0^{n-1}u_j(\#0^{n-1})^{-1}$ for each $j=0$,$\ldots$,$e-1$. Observe that
    $\#0^{n-1}u_0\cdots u_{e-1} = u_0'\cdots u_{e-1}'\#0^{n-1}$ and that $u_0' \sim u_j'$ for each
    $j=0$,$\ldots$,$e-1$. We have thus exhibited an abelian power of the claimed form thus concluding the proof.
	\end{proof}
	
  Since the substitution $\sigma$ satisfies Properties (i)-(ii) and $\sigma(\infw{w})$ has at most three letters,
  \autoref{thm:main} is proved.
	
  \section{Extension to the \texorpdfstring{$k$}{k}-abelian Setting}\label{sec:k-abelian}
  In this section, we consider a generalization of abelian equivalence. Let $k$ be a positive integer. Two words $u$
  and $v$ are \emph{$k$-abelian equivalent}, written $u \sim_k v$, if $\abs{u}_w = \abs{v}_w$ for all nonempty words
  $w$ of length at most $k$ \cite{2013:on_a_generalization_of_abelian_equivalence_and_complexity_of_infinite}. For
  words of length at least $k - 1$, we can equivalently say that $u \sim_k v$ if and only if $u$ and $v$ share a common
  prefix and a common suffix of length $k - 1$ and $\abs{u}_w = \abs{v}_w$ for each word $w$ of length $k$
  \cite[Lemma~2.4]{2013:on_a_generalization_of_abelian_equivalence_and_complexity_of_infinite}. The $k$-abelian
  equivalence relation is a congruence relation. Notice that $1$-abelian equivalence is simply abelian equivalence.
  Moreover, if $u \sim_{k+1} v$, then $u \sim_k v$.

  A nonempty word $u_0 \cdots u_{e-1}$ is a \emph{$k$-abelian power} of exponent $e$ and period $m$ if $\abs{u_0} = m$
  and $u_0 \sim_k \cdots \sim_k u_{e-1}$. It was proved in
  \cite[Thm.~5.4]{2013:on_a_generalization_of_abelian_equivalence_and_complexity_of_infinite} using Szemer\'{e}di's
  theorem that every infinite word having bounded $k$-abelian complexity contains $k$-abelian powers of arbitrarily
  high exponent. Sturmian words are particular examples of such words, so each Sturmian word contains $k$-abelian
  powers of arbitrarily high exponent; an alternative proof of this fact is given in
  \cite[Lemma~3.10]{2018:on_k-abelian_equivalence_and_generalized_lagrange_spectra}
  
  Let $\infw{w}$ be an infinite word. Then we set $\abexp{k,\infw{w}}{m}$ to be the supremum of the exponents of
  $k$-abelian powers of period $m$ occurring in $\infw{w}$. We define the \emph{$k$-abelian critical exponent} of
  $\infw{w}$ to be the quantity
  \begin{equation*}
    \limsup_{m \to \infty} \frac{\abexp{k,\infw{w}}{m}}{m},
  \end{equation*}
  and we denote it by $\act[k]{\infw{w}}$. This generalization of the abelian critical exponent is considered in the
  preprint \cite{2018:on_k-abelian_equivalence_and_generalized_lagrange_spectra}, where the authors of this paper study
  the set of finite $k$-abelian critical exponents of Sturmian words. This set, dubbed as the \emph{$k$-Lagrange
  spectrum}, is similarly complicated as the Lagrange spectrum. When $k > 1$, the least accumulation point of the
  $k$-Lagrange spectrum is $\sqrt{5}/(2k - 1)$, and the spectrum is dense in the interval
  $(\sqrt{5}/(2k - 1), \infty)$.

  Next we prove the following analogue of \autoref{thm:main}.

  \begin{theorem}\label{thm:main_2}
    Let $\theta$ be a nonnegative real number. Then there exists an infinite word $\infw{w}$ such that
    $\act[k]{\infw{w}} = \theta$. The word $\infw{w}$ can be taken over an alphabet of at most three letters.
  \end{theorem}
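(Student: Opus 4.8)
The proof follows the template of \autoref{thm:main}. The case $\theta = 0$ is immediate: a binary word avoiding abelian fourth powers (which exists by \cite{1979:strongly_non-repetitive_sequences_and_progression-free_sets}) also avoids $k$-abelian fourth powers, because every $k$-abelian power is in particular an abelian power; hence its $k$-abelian critical exponent is $0$. So assume $\theta > 0$. The plan is again to produce an $N$-uniform substitution $f$ from a binary alphabet into a ternary one, with $N$ even and arbitrarily large, this time satisfying the following $k$-abelian analogues of Properties (i) and (ii): (i$'$) if an abelian power $u_0 \cdots u_{e-1}$ occurs in $\infw{w}$, then $f(u_0) \cdots f(u_{e-1})$ is a $k$-abelian power occurring in $f(\infw{w})$; and (ii$'$) if a $k$-abelian power $u_0 \cdots u_{e-1}$ with $e \geq N$ occurs in $f(\infw{w})$, then $\infw{w}$ contains an abelian power $v_0 \cdots v_{e-1}$ with $\abs{v_0} = \abs{u_0}/N$. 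The point of mixing the two equivalences in this way is that we may keep working with the \emph{ordinary} abelian critical exponents of Sturmian words, i.e., with the Lagrange spectrum $\mathcal{L}$, rather than having to understand the $k$-Lagrange spectrum: granting such an $f$, we pick an even integer $N \geq 2k$ large enough that $N\theta$ lies in the half-line $[c_F, \infty) \subseteq \mathcal{L}$, let $\infw{s}$ be a Sturmian word with $\act{\infw{s}} = N\theta$, and run the computation of \autoref{sec:abelian} verbatim with every occurrence of $\abexp{f(\infw{s})}{m}$ replaced by $\abexp{k,f(\infw{s})}{m}$ (the powers in $\infw{s}$ being still read abelianly, and (i$'$), (ii$'$) and the divisibility property below playing the roles of Properties (i), (ii) and \autoref{lem:div_length}). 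This yields $\act[k]{f(\infw{s})} = \act{\infw{s}}/N = \theta$, and $f(\infw{s})$ has at most three letters, so \autoref{thm:main_2} follows.

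It remains to construct $f$. Write $N = 2k + c$ with $c \geq 0$ even, and define $f \colon \{0,1\}^* \to \{0,1,\#\}^*$ by $f(a) = \# 0^{k-1}\, a\, 0^{k-1+c}$ for $a \in \{0,1\}$. Then $f$ is injective, and its images each contain exactly one occurrence of $\#$, at the same position; so \autoref{rem:generalization_of_divisibility_property} applies (with $u = u' = \varepsilon$, so that $\abs{u\#v} = N$), and since a $k$-abelian power is an abelian power, every $k$-abelian power of exponent $\geq N$ occurring in $f(\infw{w})$ has period divisible by $N$. For Property (i$'$) it is enough to show that $u \sim v$ implies $f(u) \sim_k f(v)$ whenever $\abs{u} = \abs{v}$, since then $f(u_0) \sim_k \cdots \sim_k f(u_{e-1})$ by transitivity. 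Here one uses that $N > k$: every length-$k$ factor of $f(u)$ meets at most one ``distinguished'' position (a position carrying a letter of $u$, or an occurrence of $\#$), so the multiset of length-$k$ factors of $f(u)$ is the union of a part depending only on $\abs{u}$ (the factors meeting an occurrence of $\#$ and the factors lying inside a run of $0$s) and a part depending only on $\abs{u}_0$ and $\abs{u}_1$ (the factors $0^j a\, 0^{k-1-j}$ straddling a distinguished letter $a$ of $u$, for $0 \leq j \leq k-1$). As $f(u)$ and $f(v)$ moreover always share the length-$(k-1)$ prefix $\#0^{k-2}$ and the length-$(k-1)$ suffix $0^{k-1}$, the characterization of $k$-abelian equivalence gives $f(u) \sim_k f(v)$. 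Finally, since $\abs{f(x)}_1 = \abs{x}_1$ and $\abs{f(x)} = N\abs{x}$ for every binary word $x$, the converse implication $f(x) \sim f(y) \Rightarrow x \sim y$ holds whenever $\abs{x} = \abs{y}$.

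For Property (ii$'$), let $u_0 \cdots u_{e-1}$ with $e \geq N$ be a $k$-abelian power in $f(\infw{w})$; being an abelian power, it has period $m$ with $N \mid m$, say $m = tN$, by the divisibility just noted. One then shifts this power (to the left or to the right) so that each of its parts begins at a position divisible by $N$, keeping $e$ and $m$ fixed. The pattern of $\#$'s in $f(\infw{w})$ is exactly as in $\sigma(\infw{w})$ of \autoref{sec:abelian}, and the bodies $0^{k-1}\, a\, 0^{k-1+c}$ are almost entirely made of $0$s, just like the bodies $0^{N-1}$, $1^{N-1}$ of $\sigma$; so the argument of \autoref{lem:abelian_power_in_preimage} — which rests only on such structural features and on the parity of $N$ — should carry over. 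Once every part begins at a block boundary, injectivity of $f$ yields factors $v_0, \ldots, v_{e-1}$ of $\infw{w}$, each of length $t$, with $v_0 \cdots v_{e-1}$ a factor of $\infw{w}$ and $f(v_j)$ equal to the $j$-th aligned part. The aligned parts are $k$-abelian equivalent, hence abelian equivalent, so by the converse implication of the previous paragraph the $v_j$ are pairwise abelian equivalent; thus $v_0 \cdots v_{e-1}$ is an abelian power of period $t = \abs{u_0}/N$ and exponent $e$ in $\infw{w}$, which is Property (ii$'$). The one genuinely technical point I anticipate is precisely this adaptation of \autoref{lem:abelian_power_in_preimage}: one must redo the case analysis on the residue modulo $N$ of the starting position of $u_0$, now with the richer block structure $\#0^{k-1} a\, 0^{k-1+c}$ replacing the uniform bodies of $\sigma$, and check in each case that either the power can be slid onto block boundaries or a contradiction with the parity of $N$ arises. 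Everything else is bookkeeping parallel to \autoref{sec:abelian}.
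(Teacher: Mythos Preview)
Your overall strategy coincides with the paper's: reduce to the abelian case via an $N$-uniform ternary substitution satisfying Properties~(i$'$) and~(ii$'$), then quote the Lagrange spectrum result for Sturmian words and rerun the computation of \autoref{sec:abelian}. Your substitution $f(a)=\#0^{k-1}a\,0^{N-k-1}$ is not the paper's $\tau$ (which sets $\tau(1)=\#0^{k-2}1^{N-2k+2}0^{k-1}$), but it works just as well, and your argument for (i$'$) is correct.

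The one place you leave open --- the shifting step for (ii$'$) --- is a real gap, but for \emph{your} $f$ it is far easier than you anticipate, and the parity of $N$ plays no role. The key feature of your $f$ is that $f(0)$ and $f(1)$ differ at a single position (position~$k$), so they share the prefix $\#0^{k-1}$ of length $k$ and the suffix $0^{N-k-1}$ of length $N-k-1$. Given a $k$-abelian power $u_0\cdots u_{e-1}$ in $f(\infw{w})$ with $e\ge N$, starting at residue $n\bmod N$ (and $N\mid\abs{u_0}$ already established), the shift is immediate: if $1\le n\le k$, each $u_j$ ends with, and $u_0$ is preceded by, the common length-$n$ prefix $\#0^{n-1}$, so set $u'_j=\#0^{n-1}u_j(\#0^{n-1})^{-1}$; if $k+1\le n\le N-1$, each $u_j$ begins with, and $u_{e-1}$ is followed by, the common length-$(N-n)$ suffix $0^{N-n}$, so set $u'_j=(0^{N-n})^{-1}u_j\,0^{N-n}$. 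Either shift preserves Parikh vectors, hence $u'_0\sim\cdots\sim u'_{e-1}$, which is all you need to pull back to an abelian power in $\infw{w}$. (You may therefore drop the requirement that $N$ be even.)

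By contrast, the paper's $\tau(0)$ and $\tau(1)$ agree only on a prefix and suffix of length $k-1$ each, so this naive shift is unavailable; the paper instead exploits the $k$-abelian hypothesis directly --- the shared length-$(k-1)$ prefix $p$ and suffix $q$ of the $u_j$ --- to determine the underlying block letters and align the power. Neither argument uses the parity trick from \autoref{lem:abelian_power_in_preimage}; that device is specific to $\sigma$, where the two images share only the initial $\#$. So your instinct to ``redo the case analysis with parity'' is a red herring: your choice of $f$ actually buys a shorter proof of (ii$'$) than the paper's.
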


  Similar to \autoref{sec:abelian}, we wish to find a substitution $f$ defined on a two-letter alphabet with the
  following properties:
  \begin{enumerate}[(i')]
    \item If an abelian power $u_0 \cdots u_{e-1}$ occurs in $\infw{w}$, then $f(u_0) \cdots f(u_{e-1})$ is a
          $k$-abelian power occurring in $f(\infw{w})$.
    \item If a $k$-abelian power $u_0 \cdots u_{e-1}$, $e \geq N$, occurs in $f(\infw{w})$, then $\infw{w}$ contains an
          abelian power $v_0 \cdots v_{e-1}$ with $\abs{v_0} = \abs{u_0}/N$.
  \end{enumerate}

  Given such a substitution $f$, \autoref{thm:main_2} is proved exactly as \autoref{thm:main} was proved in
  \autoref{sec:abelian}. The case $k = 1$ is handled by \autoref{thm:main}, so we may assume that $k > 1$.

	Let $N \geq 2k-1$ be a fixed integer, and define the $N$-uniform substitution
  $\tau\colon \{0,1\}^* \to \{0,1,\#\}^*$ by
	\begin{align*}
    0 &\mapsto \# 0^{k-2} 0^{N-2k+2} 0^{k-1}, \\
    1 &\mapsto \# 0^{k-2} 1^{N-2k+2} 0^{k-1}.
  \end{align*}

  Let $u$ and $v$ be two words of length greater than $2k - 2$. Suppose that $\pref_{k-1}(u) = \pref_{k-1}(v)$ and
  $\suff_{k-1}(u) = \suff_{k-1}(v)$, that is, assume that they share a common prefix of length $k - 1$ and a common
  suffix of length $k - 1$. One easily checks that then $uv \sim_k vu$. Remark then that it follows that
  $xuvy \sim_k xvuy$ for all words $x$ and $y$ because $\sim_k$ is a congruence.

  \begin{lemma}
    The substitution $\tau$ satisfies Property (i').
  \end{lemma}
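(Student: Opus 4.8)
The plan is to verify Property (i') directly from the structure of $\tau$. Suppose $u_0 \cdots u_{e-1}$ is an abelian power occurring in $\infw{w}$, with $\abs{u_0} = m$ and all $u_i$ abelian equivalent. Applying $\tau$, we obtain $\tau(u_0) \cdots \tau(u_{e-1})$, which occurs in $\tau(\infw{w})$; each block $\tau(u_i)$ has length $Nm$, so it remains to show the blocks are pairwise $k$-abelian equivalent. First I would observe that each image block $\tau(u_i)$ begins with the prefix $\#0^{k-2}$ of length $k-1$ (coming from the first letter's image) and ends with the suffix $0^{k-1}$ (coming from the last letter's image), so all the blocks share a common prefix of length $k-1$ and a common suffix of length $k-1$; since $N \geq 2k-1$, each block has length $Nm \geq 2k-1 > 2k-2$, so the prefix/suffix characterization of $k$-abelian equivalence from \cite[Lemma~2.4]{2013:on_a_generalization_of_abelian_equivalence_and_complexity_of_infinite} applies.

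The remaining task is to show $\abs{\tau(u_i)}_w = \abs{\tau(u_j)}_w$ for every word $w$ of length $k$. Here I would use the congruence property together with the remark preceding the lemma: $\tau(a)\tau(b) \sim_k \tau(b)\tau(a)$ whenever $\tau(a)$ and $\tau(b)$ share a prefix of length $k-1$ and a suffix of length $k-1$, which they always do by the above observation (and $\abs{\tau(a)} = N > 2k-2$). Since $u_i$ and $u_j$ are abelian equivalent, $\tau(u_i)$ is obtained from $\tau(u_j)$ by a sequence of transpositions of adjacent image blocks; by the remark each such transposition preserves $k$-abelian equivalence class (with the surrounding blocks playing the roles of $x$ and $y$), so $\tau(u_i) \sim_k \tau(u_j)$. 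Concretely: write $u_j = c_1 \cdots c_m$ and $u_i = c_{\pi(1)} \cdots c_{\pi(m)}$ for a permutation $\pi$; decompose $\pi$ into adjacent transpositions and apply the remark iteratively, each step being a rewriting $x\,\tau(c)\tau(c')\,y \sim_k x\,\tau(c')\tau(c)\,y$. Therefore all blocks $\tau(u_0), \ldots, \tau(u_{e-1})$ lie in one $k$-abelian class, and $\tau(u_0)\cdots\tau(u_{e-1})$ is a $k$-abelian power of exponent $e$ and period $Nm$ occurring in $\tau(\infw{w})$.

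The main obstacle here is essentially bookkeeping rather than a deep difficulty: one must be careful that the prefix/suffix agreement genuinely holds for \emph{every} image block regardless of which letter it came from — this is exactly why $\tau$ was designed with the fixed prefix $\#0^{k-2}$ and fixed suffix $0^{k-1}$ framing the variable middle $0^{N-2k+2}$ or $1^{N-2k+2}$ — and that the length bound $N \geq 2k-1$ guarantees the variable middle does not collide with the fixed ends (so the prefix and suffix are well defined and the characterization of $\sim_k$ via length-$k$ factor counts plus shared borders is legitimate). Once those structural points are noted, the argument is a routine application of the congruence property, so I would present it briefly.
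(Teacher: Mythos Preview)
Your proposal is correct and follows essentially the same approach as the paper: both arguments use that $\tau(0)$ and $\tau(1)$ share a prefix and suffix of length $k-1$, invoke the remark preceding the lemma to get $\tau(0)\tau(1)\sim_k\tau(1)\tau(0)$, and then use the congruence property to reorder the letter images. The only cosmetic difference is that the paper sorts each $\tau(u_i)$ to the canonical form $\tau(0)^{\abs{u_i}_0}\tau(1)^{\abs{u_i}_1}$ in one step, whereas you decompose the permutation into adjacent transpositions; your extra paragraph about the shared prefix and suffix of the full blocks $\tau(u_i)$ is true but not needed once you run the congruence argument.
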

  \begin{proof}
    By the form of the substitution $\tau$, we have $\pref_{k-1}(\tau(0)) = \pref_{k-1}(\tau(1))$ and
    $\suff_{k-1}(\tau(0)) = \suff_{k-1}(\tau(1))$. Therefore $\tau(0)\tau(1) \sim_k \tau(1)\tau(0)$, and hence
    $\tau(u_i) \sim_k \tau(0)^{\abs{u_i}_0} \tau(1)^{\abs{u_i}_1}$ for $i = 0$, $\ldots$, $e - 1$. Let
    $u_0 \cdots u_{e-1}$ be an abelian power in $\infw{w}$. Since the words $u_0$, $\ldots$, $u_{e-1}$ are abelian
    equivalent, we have
    \begin{equation*}
		\tau(u_0), \ldots, \tau(u_{e-1}) \sim_k \tau(0)^{\abs{u_0}_0} \tau(1)^{\abs{u_0}_1},
		\end{equation*} so
    $\tau(u_0) \sim_k \cdots \sim_k \tau(u_{e-1})$.
  \end{proof}

  \begin{lemma}\label{lem:div_k}
    If $u_0 \cdots u_{e-1}$, $e\geq N$, is a $k$-abelian power occurring in $\tau(\infw{w})$,
		then $N$ divides $\abs{u_0}$.
  \end{lemma}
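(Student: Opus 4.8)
The plan is to deduce this directly from the generalized divisibility statement recorded in Remark~\ref{rem:generalization_of_divisibility_property}. The first step is the trivial observation that a $k$-abelian power is in particular an abelian power: since $u \sim_k v$ implies $u \sim_1 v$, and $\sim_1$ is ordinary abelian equivalence, the factors $u_0, \ldots, u_{e-1}$ of the given $k$-abelian power satisfy $u_0 \sim \cdots \sim u_{e-1}$, so $u_0 \cdots u_{e-1}$ is an abelian power of period $\abs{u_0}$ occurring in $\tau(\infw{w})$.

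Next I would verify that $\tau$ matches the template of Remark~\ref{rem:generalization_of_divisibility_property}. Indeed, $\tau(0) = \#\,0^{N-1}$ and $\tau(1) = \#\,0^{k-2}1^{N-2k+2}0^{k-1}$, so in each image the letter $\#$ occurs exactly once, and in both images at position $0$. Writing $\tau(0) = u\#v$ and $\tau(1) = u'\#v'$ we may therefore take $u = u' = \varepsilon$, $v = 0^{N-1}$, and $v' = 0^{k-2}1^{N-2k+2}0^{k-1}$, which are words over $\{0,1\}$ with $\abs{u} = \abs{u'} = 0$ and $\abs{v} = \abs{v'} = N-1$; in particular $\abs{u\#v} = N$.

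With these identifications Remark~\ref{rem:generalization_of_divisibility_property} applies to the abelian power $u_0 \cdots u_{e-1}$, since $e \geq N = \abs{u\#v}$, and yields that $\abs{u\#v} = N$ divides $\abs{u_0}$. This is exactly the claim. I do not expect any real obstacle: the only points needing care are that each of the two images of $\tau$ contains no occurrence of $\#$ beyond the leading one and that these leading occurrences are aligned at the same position, both of which are immediate from the definition of $\tau$; the rest is a verbatim invocation of the remark.
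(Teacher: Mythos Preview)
Your proposal is correct and follows exactly the same approach as the paper: reduce to the abelian case by noting that a $k$-abelian power is an abelian power, then invoke Remark~\ref{rem:generalization_of_divisibility_property} after checking that $\tau$ fits its template. The paper's proof is just a terser version of what you wrote.
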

  \begin{proof}
	Since $u_0\cdots u_{e-1}$ is a $k$-abelian power, it is an abelian power. Observe now that the substitution
	$\tau$ is as in \autoref{rem:generalization_of_divisibility_property}. Thus $N$ divides $\abs{u_0}$.
  \end{proof}

  \begin{lemma}
    The substitution $\tau$ satisfies Property (ii').
  \end{lemma}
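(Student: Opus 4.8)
The plan is to mimic the proof of \autoref{lem:abelian_power_in_preimage}, that is, to show that any $k$-abelian power $u_0 \cdots u_{e-1}$ with $e \geq N$ in $\tau(\infw{w})$ can be shifted (without changing the period) to a $k$-abelian power $u'_0 \cdots u'_{e-1}$ in which every block $u'_j$ starts at a position divisible by $N$, i.e.\ each $u'_j$ begins with $\# 0^{k-2}$. Once this is done, injectivity of $\tau$ (which is clear, since the middle block $0^{N-2k+2}$ or $1^{N-2k+2}$ is nonempty as $N \geq 2k-1$) lets us write $u'_j = \tau(v_j)$ for a unique factor $v_j$ of $\infw{w}$ of length $\abs{u_0}/N$ (the divisibility coming from \autoref{lem:div_k}), with $v_0 \cdots v_{e-1}$ a factor of $\infw{w}$. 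Finally, since $u'_i \sim_k u'_j$ forces in particular $\abs{u'_i}_1 = \abs{u'_j}_1$, and $\abs{\tau(v_i)}_1 = \abs{v_i}_1 (N-2k+2)$, we get $\abs{v_i}_1 = \abs{v_j}_1$, so the $v_j$ are abelian equivalent and $v_0 \cdots v_{e-1}$ is the desired abelian power.

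First I would set up coordinates as in \autoref{lem:div_length}: write $\tau(\infw{w}) = a_0 a_1 \cdots$, note that $a_n = \#$ exactly when $n \equiv 0 \pmod N$, let $i$ be the position of $u_0$, and put $n = i \bmod N$; by \autoref{lem:div_k} we have $N \mid \abs{u_0}$, so every $u_j$ occupies positions in the same residue class $n$ modulo $N$. If $n = 0$ we are already done. If $0 < n \leq k-1$ or $n \geq N - (k-2)$, then — because each $\tau$-image has the fixed prefix $\# 0^{k-2}$ and fixed suffix $0^{k-1}$, which overlap across an image boundary to give a long run of $0$'s straddling each $\#$ — one can slide the whole power left or right by at most $k-1$ positions to land on a multiple of $N$ while only moving through $0$'s; the shifted blocks are literally conjugates of the original ones by a power of $0$, hence still $k$-abelian equivalent to each other (and still a genuine occurrence in $\tau(\infw{w})$). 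This handles the ``boundary'' residues cheaply.

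The main case, and the main obstacle, is $k-1 < n < N-(k-2)$, where the starting position of each $u_j$ falls strictly inside the ``variable'' part $c^{N-2k+2}$ of some image $\tau(c)$. Here I would argue, exactly in the spirit of \autoref{lem:abelian_power_in_preimage}, that all the blocks $u_j$ must in fact read the \emph{same} letter $c$ in that variable part — and then shifting is again possible because each block begins with a run of $c$'s long enough to reach the next $\#$. To get the ``same letter'' claim, suppose $u_0$ lies over a block of $0$'s but $u_1$ lies over a block of $1$'s (these are forced to be adjacent images since $N \mid \abs{u_0}$). Comparing $\abs{u_0}_1$ and $\abs{u_1}_1$ via the preimage factors and the relation $\abs{\tau(c)}_1 \in \{0, N-2k+2\}$, one derives a divisibility constraint of the form $(N-2k+2) \mid (\text{bounded multiple of a quantity} < N-2k+2)$, which (using $0 < n - (k-1) < N-2k+2$ and the analogue of the ``$N$ even'' trick — here one should check whether $\tau$ needs an extra parity hypothesis, or whether the asymmetry $0^{k-2}\cdots 0^{k-1}$ already rules out the bad equality) is contradictory. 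I expect the bookkeeping of exactly which boundary letters $\tau(c)$ contributes $1$'s to each truncated/padded block to be the delicate point; once the letter is pinned down uniformly, the shift argument closes as before, and we set $u'_j = \#0^{k-2} \cdots$ by conjugating each $u_j$ with the appropriate prefix/suffix of $0$'s, completing the proof.
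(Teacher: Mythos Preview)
Your overall plan matches the paper's: shift the $k$-abelian power so that each block begins with $\#$, then decode via $\tau^{-1}$. The gap is in your ``main case'', where you try to pin down the underlying letter by a count of $1$'s analogous to \autoref{lem:abelian_power_in_preimage} and explicitly flag uncertainty about whether this closes without an extra parity hypothesis on $N$ (which $\tau$ does not carry). The paper avoids any such count by using a property of $k$-abelian equivalence that you never invoke: for $k\geq 2$, the words $u_0,\ldots,u_{e-1}$ automatically share a common \emph{prefix} $p$ and a common \emph{suffix} $q$ of length $k-1$.

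This common prefix is what does all the work. After disposing of the cases where $\#$ lies in $p$ or in $q$ (essentially your ``boundary'' residues), one is left with $p$ occurring as a proper factor of some $\tau(c_i)$, say $\tau(c_i) = x_i p y_i$ with $x_i,y_i$ nonempty. Now every proper length-$(k-1)$ factor of $\tau(0)=\#0^{N-1}$ equals $0^{k-1}$, whereas every proper length-$(k-1)$ factor of $\tau(1)=\#0^{k-2}1^{N-2k+2}0^{k-1}$ contains a $1$, since the zero-runs $0^{k-2}$ and $0^{k-1}$ at either end are too short to host $0^{k-1}$ away from the endpoints. Hence $p$ alone determines $c_i$, and since $p$ is common to all the $u_i$ we get $c_0=\cdots=c_{e-1}$ at once; the analogous argument with $q$ fixes the letter at the right end, after which the shift $u'_i = x_0 u_i x_0^{-1}$ preserves $k$-abelian equivalence because the same length-$k$ factors are removed and added for each $i$. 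In short, the asymmetric padding $0^{k-2}\cdots 0^{k-1}$ in $\tau$ is designed precisely so that a length-$(k-1)$ window distinguishes $\tau(0)$ from $\tau(1)$ --- not to salvage a parity argument. Your counting route is a detour and, as you suspect, it is not clear it would close.
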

  \begin{proof}
    Suppose that a $k$-abelian power $u_0 \cdots u_{e-1}$ with $e \geq N$ occurs in $\tau(\infw{s})$. By
    \autoref{lem:div_k}, $N$ divides $\abs{u_0}$. Similar to the proof of \autoref{lem:abelian_power_in_preimage}, we
    want to show that the $k$-abelian power $u_0 \cdots u_{e-1}$ can be shifted (to the left or the right) to obtain
    another $k$-abelian power $u'_0 \cdots u'_{e-1}$, $\abs{u'_0} = \abs{u_0}$, such that each $u'_i$ begins with $\#$.
    Then a slight modification of the argument presented in the first paragraph of the proof of
    \autoref{lem:abelian_power_in_preimage} proves the claim. Indeed, given the preimages $v_0$, $\ldots$, $v_{e-1}$ of
    $u'_0$, $\ldots$, $u'_{e-1}$, we see that $\abs{v_i}_0 = \abs{u'_i}_{\#0^{k-1}}$ for all $i$. Since
    $u'_0 \sim_k \cdots \sim_k u'_{e-1}$, we have $\abs{u'_i}_{\#0^{k-1}} = \abs{u'_j}_{\#0^{k-1}}$ for all $i$ and
    $j$, and it follows that $v_0 \sim \cdots \sim v_{e-1}$.

    Let $p$ be the common prefix of length $k - 1$ of the words $u_0$, $\ldots$, $u_{e-1}$ and similarly $q$ be the
    common suffix of length $k - 1$ of these words. Suppose first that $\#$ occurs in $p$, that is, $p = 0^r \# 0^s$
    with $r + s = k - 2$. As each occurrence of $\#$ is preceded by $0^{k-1}$ and $N$ divides $\abs{u_i}$, the word
    $u_{e-1}$ is followed by $0^r$. Thus we may set $u'_i = (0^r)^{-1} u_i 0^r$ for $i = 0$, $\ldots$, $e - 1$. The
    same $r$ factors $0^r \# 0^{s+1}$, $0^{r-1} \# 0^{s+2}$, $\ldots$, $0 \# 0^{s+r}$ of length $k$ were removed from
    each $u_i$ and the same $r$ factors of length $k$ were added to each $u'_i$ (the final $k-1$ factors of $q 0^r$ of
    length $k$) during the shift. Thus $u'_0 \sim_k \cdots \sim_k u'_{e-1}$. If $\#$ occurs in $q$, that is, say
    $q = 0^r \# 0^s$ with $r + s = k - 2$ then, like above, we may set $u'_i = \# 0^s u_i (\# 0^s)^{-1}$ for
    $i = 0$, $\ldots$, $e - 1$. Suppose then that some word $u_i$ begins with $0^{k-1} \#$. It is straightforward to
    see that then all of the words $u_0$, $\ldots$, $u_{e-1}$ begin with $0^{k-1} \#$ and, furthermore, that $u_{e-1}$
    is followed by $0^{k-1} \#$. Setting $u'_i = (0^{k-1})^{-1} u_i 0^{k-1}$ for $i = 0$, $\ldots$, $e - 1$ gives the
    claim as above.

    By the preceding paragraph, we may assume that the occurrence of $p$ as the prefix of $u_i$ is a proper factor of
    $\tau(c_i)$ for a letter $c_i$, that is, we may write $\tau(c_i) = x_i p y_i$, with $x_i$ and $y_i$ nonempty, for
    this occurrence of $p$. Moreover, the preceding paragraph tells that we may assume that $q$ is a proper suffix of
    $x_i$ (otherwise $\#$ occurs in $q$). Since $p$ has length $k - 1$, it is clear from the form of the substitution
    $\tau$ that the letter $c_i$ is uniquely determined by $p$. Since $N$ divides $\abs{u_i}$, it follows that
    $c_0 = \ldots = c_{e-1}$. This means that each $u_i$ is preceded by $x_0$. We still need to know that $u_{e-1}$
    ends with $x_0$; the words $u_0$, $\ldots$, $u_{e-2}$ must end with $x_0$. Since $N$ divides $\abs{u_i}$, the
    suffix $q$ of $u_{e-1}$ occurs in $\tau(d)$, $d \in \{0, 1\}$, in the same position as the occurrence of $q$
    preceding $py_0$ in $\tau(c_0)$. Now $q$ has length $k - 1$, so its occurrence preceding $py_i$ in $\tau(c_i)$
    uniquely determines $c_i$, and hence its occurrence in $\tau(d)$ in the same position uniquely determines $d$.
    Therefore $d = c_0$ and $u_{e-1}$ ends with $x_0$. We may now set $u'_i = x_0 u_i x_0^{-1}$ for $i = 0$, $\ldots$,
    $e - 1$. The suffix $x_0$ of $u_i$ is preceded by $0^{k-1}$ and $u_i$ has prefix $p$ of length $k - 1$, so exactly
    the same factors of length $k-1$ are added and removed when shifting each $u_i$ to $u'_i$. Thus
    $u'_0 \sim_k \cdots \sim_k u'_{e-1}$.
  \end{proof}

  Since $\tau$ satisfies Properties (i') and (ii'), \autoref{thm:main_2} follows.

  \section{Concluding Remarks}
  \autoref{thm:main} raises the following question.

  \begin{question}
    Given a nonnegative real number $\theta$, does there exist an infinite binary word having $k$-abelian critical
    exponent $\theta$?
  \end{question}

  We conjecture that the question has a positive answer. To use the presented method, the marker letter $\#$ needs to
  be replaced by a suitable binary word ensuring that Properties (ii) and (ii') hold. There seems to be no obvious
  choice, at least no obvious choice leading to reasonable proofs. Perhaps another method is required. It would
  certainly be very interesting if the answer to the above question turned out to be negative. Nevertheless, we leave
  the question open.

  The $k$-abelian equivalence is a refinement of abelian equivalence that ``tends'' to the usual equality of words as
  $k \to \infty$. As mentioned in the introduction, it is typical to consider the maximum exponent
  $\sup_{m \geq 1} \exp(m)$ for the equality relation, not the superior limit of the ratio between the maximum exponent
  $\exp(m)$ and period $m$ as is done here for abelian equivalence and $k$-abelian equivalence. What then happens if we
  consider the unorthodox notion? Does an analogue to \autoref{thm:main} hold? The answer is yes. The following result
  is proved by the authors in the preprint \cite{2018:on_k-abelian_equivalence_and_generalized_lagrange_spectra}.

  \begin{proposition}\cite[Prop.~3.17]{2018:on_k-abelian_equivalence_and_generalized_lagrange_spectra}
    Given an infinite word $\infw{w}$, let $E(\infw{w})$ be the quantity
    \begin{equation*}
      \limsup_{m \to \infty} \frac{\exp(m)}{m},
    \end{equation*}
    where $\exp(m)$ is the supremum of (integral) exponents of powers of period $m$ occurring in $\infw{w}$. For each
    nonnegative $\theta$, there exists a Sturmian word $\infw{s}$ such that $E(\infw{s}) = \theta$.
  \end{proposition}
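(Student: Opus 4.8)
Since two Sturmian words with the same irrational slope have the same set of factors, the quantity $E$ depends only on the slope, so it suffices, for each $\theta\geq 0$, to exhibit one irrational $\alpha\in(0,1)$ for which some (equivalently, every) Sturmian word $\infw{s}$ of slope $\alpha$ satisfies $E(\infw{s})=\theta$. Write $\alpha=[0;a_1,a_2,\dots]$, let $p_n/q_n$ be its convergents, and let $s_n$ be the $n$th standard word of $\alpha$, so that $\abs{s_n}=q_n$, the word $s_n$ is primitive, and $s_n^{a_{n+1}}$ occurs in $\infw{s}$ for every $n$. For $\theta=0$ one simply takes $\alpha$ with bounded partial quotients: such a Sturmian word has finite critical exponent, hence $\exp(m)$ is bounded and $\exp(m)/m\to 0$.

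Assume now $\theta>0$. The construction is to fix the partial quotients one at a time: having chosen $a_1,\dots,a_n$ (hence $q_n$), set $a_{n+1}=\max\{1,\lfloor\theta q_n\rfloor\}$. Since $q_n\to\infty$, we have $a_{n+1}=\lfloor\theta q_n\rfloor$ for all large $n$, so $a_{n+1}/q_n\to\theta$; the resulting $\alpha$ is irrational. Let $\infw{s}$ be the characteristic Sturmian word of slope $\alpha$. The claim is that $E(\infw{s})=\theta$.

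The lower bound is immediate: since $s_n^{a_{n+1}}$ occurs in $\infw{s}$ and $s_n$ is primitive of length $q_n$, we have $\exp(q_n)\geq a_{n+1}$, whence
\begin{equation*}
  E(\infw{s}) \;\geq\; \limsup_{n\to\infty}\frac{\exp(q_n)}{q_n}\;\geq\;\limsup_{n\to\infty}\frac{a_{n+1}}{q_n}\;=\;\theta .
\end{equation*}
For the upper bound the plan is to invoke the known description of the repetition structure of a Sturmian word in terms of the continued fraction expansion of its slope (the analysis of initial powers and repetitions via the three-distance theorem and Ostrowski numeration): there is an absolute constant $C$ such that, for every slope $\alpha$ and every integer $m$ with $q_n\leq m<q_{n+1}$, every factor of $\infw{s}$ admitting period $m$ has length at most $q_{n+1}+Cm$. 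In particular $\exp(m)\leq q_{n+1}/m+C$, and since $m\geq q_n$,
\begin{equation*}
  \frac{\exp(m)}{m}\;\leq\;\frac{q_{n+1}}{q_n^{2}}+\frac{C}{q_n}\;=\;\frac{a_{n+1}}{q_n}+\frac{q_{n-1}}{q_n^{2}}+\frac{C}{q_n}\;\longrightarrow\;\theta
\end{equation*}
as $m\to\infty$ (hence $n\to\infty$), using $a_{n+1}/q_n\to\theta$. Thus $E(\infw{s})\leq\theta$, and together with the lower bound $E(\infw{s})=\theta$.

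The hard part will be the repetition bound used in the upper-bound step, and in particular the fact that it holds with leading coefficient exactly $1$ in front of $q_{n+1}$ — equivalently, that the longest power of period $q_n$ occurring in $\infw{s}$ has length $q_{n+1}+O(q_n)$, i.e.\ $\exp(q_n)=a_{n+1}+O(1)$. A softer estimate with leading coefficient $C_0>1$ would only sandwich $E(\infw{s})$ in the interval $[\theta,C_0\theta]$; matching the elementary bound $\exp(q_n)\geq a_{n+1}$ forces the sharp additive control. Establishing this, and ruling out accidentally long powers of periods $m$ strictly between $q_n$ and $q_{n+1}$ — including the multiples $jq_n$, for which one expects $\exp(jq_n)=a_{n+1}/j+O(1)$ — is precisely the delicate point; once it is available, the remainder is the elementary computation above. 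This is the route taken in \cite{2018:on_k-abelian_equivalence_and_generalized_lagrange_spectra}.
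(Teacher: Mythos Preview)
The paper does not contain a proof of this proposition: it is stated as a citation to \cite[Prop.~3.17]{2018:on_k-abelian_equivalence_and_generalized_lagrange_spectra} and no argument is given in the present paper. So there is nothing in this paper to compare your attempt against.

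That said, your sketch is a plausible outline and is, as you yourself acknowledge, not a proof but a plan that defers the crux to the cited preprint. The reduction to the slope, the case $\theta=0$ via bounded partial quotients, and the lower bound via the occurrence of $s_n^{a_{n+1}}$ are all standard and correct. The substantive content lies entirely in the repetition bound you invoke for the upper estimate: that for $q_n\le m<q_{n+1}$ the longest factor of $\infw{s}$ with period $m$ has length at most $q_{n+1}+Cm$, with leading coefficient exactly~$1$. You correctly identify that a constant $C_0>1$ in front of $q_{n+1}$ would ruin the matching, and that one must also control periods $m$ that are not of the form $q_n$ (in particular multiples $jq_n$). As written, however, this step is asserted rather than proved, so the proposal is an outline that still relies on the external reference for its decisive lemma; it is not a self-contained argument.
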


  \bibliographystyle{splncs04}
  \bibliography{references}
  
\end{document}